\newcommand{\conv}{\mathop{\mathrm{conv}}}
\newcommand{\xc}{\mathop{\mathrm{xc}}} 
\newcommand{\RR}{{\mathbb{R}}}
\title{A generalization of extension complexity that captures $P$}
\author[mu,ku]{David Avis} 
\ead{avis@cs.mcgill.ca}
\author[cu]{Hans Raj Tiwary\corref{cor1}}
\ead{hansraj@kam.mff.cuni.cz}
\address[mu]{GERAD and School of Computer Science, McGill University,
   3480 University Street, Montreal, Quebec, Canada H3A 2A7.}
\address[ku]{Graduate School of Informatics,
   Kyoto University, Sakyo-ku, Yoshida Yoshida, Kyoto 606-8501, Japan}
\address[cu]{Department of Applied Mathematics (KAM) and Institute of Theoretical Computer Science (ITI),
 Charles University,
 Malostransk\'e n\'am. 25,
 118 00 Prague 1, Czech Republic}
\theoremstyle{plain}
\newtheorem{theorem}{Theorem}
\newtheorem{lemma}{Lemma}
\theoremstyle{definition}
\theoremstyle{remark}
\let\cite\citep
\newlength{\normalparindent}
\newlength{\normalparskip}
\begin{document}

\begin{abstract}

In this paper we propose a generalization of the extension complexity of a polyhedron $Q$.
On the one hand it is general enough so that all problems in $P$ can be formulated
as linear programs with polynomial size extension complexity.
On the other hand it still allows non-polynomial lower bounds to be proved for $NP$-hard problems
independently of whether or not $P=NP$.
The generalization, called $H$-free extension complexity, allows for a set of valid inequalities $H$
to be excluded in computing the extension complexity of $Q$.
We give results on the $H$-free extension complexity of hard matching problems (when $H$ are the 
odd set inequalities) and the traveling salesman problem (when $H$ are the subtour elimination constraints).

\end{abstract}

\begin{keyword}
 Polytopes \sep Extended Formulations \sep extension complexity \sep lower bounds \sep linear programming

\end{keyword}

\maketitle

\section{Introduction}
\label{intro}
Since linear programming is in $P$ we will not be able to solve an $NP$-hard problem 
$X$ in polynomial time (poly-time) by linear programming unless $P=NP$. On the other hand, since 
linear programming is $P$-complete, we will not be able to prove a super-polynomial lower bound on solving 
$X$ by a linear program (LP) without showing that $P \neq NP$. 
One way to make progress on this problem is to consider restricted versions of linear programming which have two properties:

\begin{description}
\item[Property (1):]
Problems in $P$ will still be solvable in the poly-time even in the restricted version of linear programming. 

\item[Property (2):]
Known $NP$-hard problems with natural LP formulations will have provable super-polynomial lower 
bounds under the restricted version of linear programming.
\end{description}

Note that results of type (1) and (2) will still be true, independently of whether or not $P=NP$. 

A candidate for such a restricted LP model is extension complexity. 
In formulating optimization problems as LPs, adding extra variables can greatly reduce the size of the 
LP \cite{ConfortiCornuejolsZambelli10}. An extension of a polytope is such a formulation that projects onto the
original LP formulation of the problem. In this model, LP formulations of some problems in $P$ 
that have exponential size can be reduced to polynomial size in higher dimensions.
For example Martin \cite{Martin91} showed that the minimum spanning tree problem has an extended formulation of
size $O(n^3)$ even though its natural formulation requires exponentially many inequalities.

Various authors have shown that extended formulations of various $NP$-hard problems have exponential lower bounds on their size \cite{Ya91,FMPTW,AT13, PV13}.
However this promising restricted model for LP unfortunately does not satisfy property (1):
Rothvo{\ss} \cite{Rothvoss13} recently proved that the matching problem has exponential extension complexity.

Here we propose a stronger version of extension complexity which satisfies property (1). 
We also exhibit some $NP$-hard problems that satisfy property (2).
In the proposed model we concentrate on the separation problem rather than the polynomial time equivalent optimization problem.

Let $Q$ be a polytope with half-space representation $F(Q)$ and let $H$ be a valid set of inequalities for $Q$.
We delete from $F(Q)$ all half-spaces that are redundant with respect to $H$ and call the resulting (possibly
empty) polyhedron $Q_H$. The {\em H-free extension complexity of Q} is defined to be the extension complexity
of $Q_H$.
If this extension complexity is polynomial and $H$ can be separated in poly-time then
we can solve LPs over $Q$ in poly-time. 
Note that this is true {\em even if $H$ itself has super-polynomial extension complexity}.
On the other hand if $Q_H$ has super-polynomial extension complexity,
then even if $H$ can be poly-time separated, any LP that requires an explicit formulation of $Q_H$
will have super-polynomial size. This allows us to strengthen existing results on the extension complexity
of $NP$-hard problems.
To illustrate this, we give results on the $H$-free extension complexity of hard matching problems (when $H$ are the 
odd set inequalities) and the traveling salesman problem (when $H$ are the subtour elimination inequalities).

\section{Background}

We begin by recalling some basic definitions related to extended formulations of polytopes.
The reader is referred to 
\cite{ConfortiCornuejolsZambelli10, FFGT12}
for more details.
An \emph{extended formulation} (EF) of a polytope $Q \subseteq \RR^d$
is a linear system
\begin{equation} \label{eq:EF}
E x + F y = g,\ y \geqslant \mathbf{0}
\end{equation}
in variables $(x,y) \in \RR^{d+r},$ where $E, F$ are real
matrices with $d, r$ columns respectively, and $g$ is a column vector,
such that $x \in Q$ if and only if there exists $y$ such
that \eqref{eq:EF} holds. The \emph{size} of an EF is defined as its
number of \emph{inequalities} in the system.

An \emph{extension} of the polytope $Q$ is another polytope
$Q' \subseteq \mathbb{R}^e$ such that $Q$ is the image of $Q'$ under a linear map.
We define the \emph{size} of an extension $Q'$ as the number of facets of $Q'$. 
Furthermore, we define the \emph{extension complexity} of $Q$, denoted by $\xc{( Q )},$ 
as the minimum size of any extension of $Q.$

For a matrix $A$, let $A_i$ denote the $i$th row of $A$ and $A^j$ to denote the $j$th column of $A$. Let $Q = \{x \in \RR^d \mid Ax \leqslant b\} = \conv(V)$ be a polytope, with $A \in \RR^{m \times d}$, $b \in \RR^m$ and $V = \{v_1,\ldots,v_n\} \subseteq \RR^d$. Then \(M \in \RR_+^{m \times n}\) defined as
\(M_{ij} := b_i - A_i v_j\) with \(i \in [m] := \{1,\ldots,m\}\) and \(j \in [n] := \{1,\ldots,n\}\) is the \emph{slack matrix} of \(Q\) w.r.t.\ $Ax \leqslant b$ and $V$.

We call the submatrix of $M$ induced by rows corresponding to facets and columns corresponding to
vertices the \emph{minimal slack matrix} of $Q$ and denote it by~$M(Q)$. Note that the slack matrix
may contain columns that correspond to feasible points that are not vertices of $Q$ and rows that correspond to valid inequalities that are not facets of $Q$, and therefore the slack matrix of a polytope is not a uniquely defined object. However every slack matrix of $Q$ must contain rows and columns corresponding to facet-defining inequalities and vertices, respectively.

As observed in \cite{FMPTW}, for proving bounds on the extension complexity of a polytope $Q$ it suffices to take any slack matrix of $Q$. Throughout the paper we refer to the minimal slack matrix of $Q$ as \emph{the} slack matrix of $Q$ and any other slack matrix as \emph{a} slack matrix of $Q.$

\section{$H$-free extensions of polytopes}

Let $X$ be some computational problem that can be solved by an LP over a polytope $Q$.
For the applications considered in this paper, it is convenient
to consider the case where $Q$ is given by an implicit description of its vertices.
So for the matching problem, $Q$ is the convex hull of all 0/1 matching vectors, and for the TSP problem 
it is the convex hull of all 0/1 incidence vectors of Hamiltonian circuits.

For the given polytope $Q$ let $F(Q)$ be a non-redundant half-space representation.
If $Q$ has full dimension $F(Q)$ is unique and each half-space supports a facet of $Q$.
Otherwise we may assume that $F(Q)$ is defined relative to some canonical representation
of the linearity space of $Q$.
Our restricted LP model will allow a restricted separation oracle for $Q$.

Let $H=H(Q)$ be a possibly super-polynomial size set of valid inequalities for $Q$ 
equipped with an $H$-separation oracle. 
We delete from $F(Q)$ all half-spaces that are redundant with respect to $H$ and call the resulting (possibly
empty) polyhedron $Q_H$. 

We can solve the separation problem for $Q$ for a point $x$ by first solving it for $H$
and then, if necessary, for $Q_H$. Suppose $x$ is not in $Q$.  
If $x$ is not in $H$ we get a violated inequality by the oracle. 
Otherwise $x$ must violate a facet of $Q_H$. 
We will allow separation for $Q_H$ to be performed using any extension $Q_H'$ of $Q_H$ by explicitly 
checking the facets of $Q_H'$ for the lifting of $x$. 
We call $Q'_H$ an {\em H-free EF for Q}. 
Using this separation algorithm and the ellipsoid method we have a way to solve LPs over $Q$. 
We call such a restricted method of solving LPs an {\em H-free LP for Q}.
The {\em H-free extension complexity of Q} is defined to be  $\xc(Q_H)$.
 
We say that an {\em H-free EF for Q has polynomial size} if:
\begin{description}
\item
(a) The $H$-separation oracle runs in poly-time and
\item
(b) $\xc(Q_H)$ is polynomial in the input size of $X$.
\end{description}
In this case we also have an $H$-free LP for $Q$ that runs in polynomial time.
On the other hand, if for given $H$,  $\xc(Q_H)$ is super-polynomial in the size of $X$ then we say that all $H$-free
LPs for $X$ run in super-polynomial time. Note that this statement is independent of whether or not
$P=NP$.
When $H$ is empty all of the above definitions reduce to standard definitions for EFs and extension complexity.

We illustrate these concepts with a few examples.
For the matching problem if $H$ is the set of odd-set inequalities then $Q_H$ is empty.
In this case we have an $H$-free EF for matching of poly-size even though matching has exponential extension complexity.

This example generalizes to show that every problem $X$ in $P$ has a poly-size 
$H$-free EF for some $H$. Indeed, since LP is $P$-complete, $X$ can be solved by 
optimizing over a polytope $Q$. Let $H$ be the entire facet list $F(Q)$ so that $Q_H$ is again empty. 
Optimization over $Q$ can be performed in poly-time so, by the equivalence of 
optimization and separation, separation over $H$ can be performed in poly-time also. 
Therefore (a) and (b) are satisfied as required.

For the TSP, let $H$ be the sub-tour constraints. In this case $Q_H$ is non-empty and in fact 
we will show in the next section that it has exponential extension complexity. 
Therefore $H$-free LPs for the TSP require exponential time, 
extending the existing extension complexity result for this problem. 

We remark that $H$ is an essential parameter here. Matching, for example, has 
poly-size $H$-free extension complexity when $H$ are the odd set inequalities, but not when $H$ is empty. 
Nevertheless, any problem with poly-size $H$-free extension complexity for some $H$ can of course be solved in poly-time.
For a given hard problem, one gets stronger hardness results by letting $H$ be larger 
and larger sets of poly-size separable inequalities, as long as one can still prove that 
$Q_H$ has super-polynomial extension complexity.
We give some examples to illustrate this in subsequent sections of the paper.

Before we proceed further, we note an intersection lemma that will be useful for proving lower bounds 
for $H$-free extensions of polytopes. This lemma is a polar formulation of the following result of Balas (\cite{balas85}):

\begin{lemma} \label{lem:union_lemma}
Let $P_1$ and $P_2$ be two polytopes with extension complexity $r_1$ and $r_2$ respectively. Then, the extension complexity of $\text{conv}(P_1\cup P_2)$ is at most $r_1+r_2+1.$
\end{lemma}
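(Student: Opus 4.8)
The plan is to prove the contrapositive-flavoured statement by working with extended formulations directly, rather than with slack matrices. Suppose $P_1 \subseteq \RR^d$ has an extension $P_1' \subseteq \RR^{e_1}$ with $r_1$ facets, via a linear map $\pi_1$, and similarly $P_2' \subseteq \RR^{e_2}$ with $r_2$ facets and linear map $\pi_2$. The key idea is that a point $x$ lies in $\conv(P_1 \cup P_2)$ if and only if it can be written as a convex combination $x = \lambda x_1 + (1-\lambda) x_2$ with $x_i \in P_i$ and $\lambda \in [0,1]$, and that this decomposition can be "homogenized": introduce a scalar variable $\lambda$, scaled copies $z_1$ of a lift of $x_1$ and $z_2$ of a lift of $x_2$, and require $z_1$ to lie in $\lambda \cdot P_1'$ and $z_2$ in $(1-\lambda)\cdot P_2'$. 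The first step is to write down this system explicitly.

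Concretely, I would build the extension $Q' \subseteq \RR^{1 + e_1 + e_2}$ consisting of all triples $(\lambda, z_1, z_2)$ satisfying: $0 \leqslant \lambda \leqslant 1$; the homogenization of the facet inequalities of $P_1'$ applied to $(z_1,\lambda)$ (i.e. if $P_1' = \{w : A w \leqslant b\}$ then require $A z_1 \leqslant \lambda b$), which forces $z_1 \in \lambda P_1'$; and likewise $A' z_2 \leqslant (1-\lambda) b'$ forcing $z_2 \in (1-\lambda) P_2'$. The linear map onto $\RR^d$ is $(\lambda,z_1,z_2) \mapsto \pi_1(z_1) + \pi_2(z_2)$, where I extend $\pi_i$ so that it annihilates the $\lambda$-direction. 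I then need to check two things: that $Q'$ is bounded (hence a polytope), and that its image under this map is exactly $\conv(P_1 \cup P_2)$. Boundedness follows because each $z_i$ is confined to the bounded set $\lambda P_i'$ (or $(1-\lambda)P_i'$) with $\lambda \in [0,1]$ and $P_i'$ a polytope. For the image: if $(\lambda,z_1,z_2) \in Q'$ with $\lambda \in (0,1)$ then $z_1/\lambda \in P_1'$ and $z_2/(1-\lambda) \in P_2'$, so the image point is $\lambda \pi_1(z_1/\lambda) + (1-\lambda)\pi_2(z_2/(1-\lambda)) \in \conv(P_1\cup P_2)$; the boundary cases $\lambda = 0,1$ force the corresponding $z_i = 0$ and give points of $P_2$ or $P_1$ respectively. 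Conversely every convex combination is clearly hit.

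Finally I would count facets: the system defining $Q'$ uses the $r_1$ homogenized inequalities from $P_1'$, the $r_2$ from $P_2'$, and the two bounds $\lambda \geqslant 0$, $\lambda \leqslant 1$ — but $\lambda \geqslant 0$ is implied (it is a consequence of $A z_1 \leqslant \lambda b$ together with boundedness of $P_1'$, or one can absorb it), so a careful accounting gives at most $r_1 + r_2 + 1$ facets, which bounds $\xc(\conv(P_1\cup P_2))$ as claimed. I expect the main obstacle to be the bookkeeping around degenerate cases — ensuring the homogenized systems genuinely cut out $\lambda P_i'$ even when $\lambda = 0$ (where one wants $z_i = 0$, which requires $P_i'$ bounded, so this is where boundedness is really used), and verifying that exactly one of the two trivial bounds on $\lambda$ survives as an essential inequality so that the count is $r_1 + r_2 + 1$ and not $r_1 + r_2 + 2$. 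Since the lemma only claims an upper bound, any slack in this count is harmless, so the argument is robust.
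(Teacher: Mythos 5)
The paper itself does not prove this lemma: it is stated as a quoted result of Balas \cite{balas85}, and only the polar intersection version (Lemma \ref{lem:intersection_lemma}) is proved in the text. Your homogenization construction is exactly the classical disjunctive-programming proof of that cited result, and it is essentially correct: the system $A z_1 \leqslant \lambda b$, $A' z_2 \leqslant (1-\lambda) b'$, $0\leqslant\lambda\leqslant 1$ with projection $(\lambda,z_1,z_2)\mapsto \pi_1(z_1)+\pi_2(z_2)$ does project onto $\conv(P_1\cup P_2)$, with boundedness of the $P_i'$ used, as you say, precisely to force $z_i=0$ when $\lambda\in\{0,1\}$ (the recession cone of a polytope is trivial). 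One caution: your closing remark that ``any slack in this count is harmless'' is not right as stated, since a count of $r_1+r_2+2$ would not yield the claimed bound of $r_1+r_2+1$; what saves you is the redundancy argument you already sketch, namely that $\lambda\geqslant 0$ is implied by $A z_1\leqslant \lambda b$ whenever $P_1'$ is bounded and has at least two points (if $\lambda<0$ then $z_1/\lambda$ would satisfy $A w\geqslant b$, forcing every point of $P_1'$ to coincide with it), and symmetrically $\lambda\leqslant 1$ is implied via $P_2'$; the only cases needing separate (easy) treatment are the degenerate ones where some $P_i'$ is a single point. With that bookkeeping made explicit, your argument is a complete and correct proof of the lemma, matching the source the paper cites rather than anything proved in the paper.
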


\begin{lemma} \label{lem:intersection_lemma}
Let $P_1$ and $P_2$ be two polytopes with extension complexity $r_1$ and $r_2$ respectively. Then, the extension complexity of $P_1\cap P_2$ is at most $r_1+r_2+1.$
\end{lemma}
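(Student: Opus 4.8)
The plan is to derive the intersection bound from the union bound (Lemma~\ref{lem:union_lemma}) by polarity, which is exactly the relationship the text advertises. First I would reduce to a convenient normalization: by translating we may assume both $P_1$ and $P_2$ contain the origin in their interior (the case where $P_1\cap P_2$ is empty or low-dimensional can be handled separately, or absorbed by working in the affine hull of the intersection). Under this normalization, polarity is an inclusion-reversing bijection on the relevant polytopes, and crucially it interchanges intersection and convex hull of union: $(P_1\cap P_2)^\circ=\conv(P_1^\circ\cup P_2^\circ)$.

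The second step is the key fact that extension complexity is invariant under polarity, i.e. $\xc(P)=\xc(P^\circ)$ for a polytope $P$ with the origin in its interior. This is because an extended formulation of size $r$ corresponds (via the slack matrix, as set up in the Background section) to a nonnegative factorization of the slack matrix of rank $r$, and the slack matrix of $P^\circ$ is essentially the transpose of the slack matrix of $P$; transposition preserves nonnegative rank, hence preserves extension complexity. So $\xc(P_i^\circ)=\xc(P_i)=r_i$.

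Putting these together: $\xc(P_1\cap P_2)=\xc\bigl((P_1\cap P_2)^\circ\bigr)=\xc\bigl(\conv(P_1^\circ\cup P_2^\circ)\bigr)\leqslant \xc(P_1^\circ)+\xc(P_2^\circ)+1=r_1+r_2+1$, where the inequality is Lemma~\ref{lem:union_lemma} applied to $P_1^\circ$ and $P_2^\circ$. That gives the claimed bound.

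The main obstacle is not the algebra but the degenerate cases: $P_1\cap P_2$ may be empty (then the bound is trivial, with the convention that the empty polytope has extension complexity $0$), or it may fail to be full-dimensional or fail to contain the origin, so the clean polar identity $(P_1\cap P_2)^\circ=\conv(P_1^\circ\cup P_2^\circ)$ needs care. The cleanest fix is to first intersect everything with the affine hull $A$ of $P_1\cap P_2$, replacing $P_i$ by $P_i\cap A$ (this does not increase extension complexity, since intersecting with an affine subspace can only drop facets after projecting out the fixed coordinates), translate so that a relative interior point of $P_1\cap P_2$ is the origin, and then carry out the polarity argument inside $A$. I would state this normalization explicitly and then run the three-line computation above.
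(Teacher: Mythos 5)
Your proposal is correct and follows essentially the same route as the paper's own proof: reduce to the full-dimensional case by intersecting with a suitable affine subspace, take polars with respect to an interior point of $P_1\cap P_2$ so that $(P_1\cap P_2)^*=\conv(P_1^*\cup P_2^*)$, use invariance of extension complexity under polarity, and apply Lemma~\ref{lem:union_lemma}. Your added justification of the polarity invariance via transposing the slack matrix is a detail the paper leaves implicit, but the argument is the same.
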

\begin{proof}
If the intersection of $P_1$ and $P_2$ is not full dimensional, then we can intersect the two polytopes with a suitable affine subspace without increasing their extension complexity. So it suffices to prove this result for the case where $P_1\cap P_2$ is full dimensional.

Taking the polar dual of the two polytopes with respect to some point in the interior of $P_1\cap P_2,$ we see that $(P_1\cap P_2)^*=\text{conv}(P_1^*\cup P_2^*).$ Since the dual of a polytope has the same extension complexity, applying Lemma \ref{lem:union_lemma} we obtain the desired result.
\end{proof}

\section{The travelling salesman problem(TSP)}

An undirected TSP instance $X$ is defined by a set of integer weights $w_{ij}, 1 \le i < j \le n$,
for each edge of the complete graph $K_n$.
A tour is a Hamiltonian cycle in $K_n$ defined by a permutation of its vertices. It is
required to compute a tour of minimum weight. We define the polytope $Q$ to be the convex hull of the
0/1 incidence vectors $x=(x_{ij}: 1 \le i < j \le n\}$ of the tours. 
It is known that $\xc(Q) = 2^{\Omega(n)}$ \cite{Rothvoss13}.

We define $H$ to be the set of $subtour~elimination$ constraints:
\begin{eqnarray}
\label{subtour}
\sum_{i,j \in S, i \neq j} x_{ij} &\leq& |S|-1,~~~~S \subseteq \{1,2,...,n-1\}, ~~ |S| \ge 2.\\
x_{ij} &\ge& 0,~~~ 1 \le i < j \le n\
\end{eqnarray}

It is well known that the subtour elimination constraints can be poly-time separated by using network flows.
These constraints by themselves define the convex hull of all forests in $K_{n-1}$
and Martin \cite{Martin91} has given an EF for them that has size $O(n^3)$.

Therefore,  $xc(Q_H)=2^{\Omega(n)}$, otherwise together with Martin's result and the intersection lemma (Lemma \ref{lem:intersection_lemma}), it would imply an upper bound of $2^{o(n)}$ for the travelling salesman polytope.
It follows that every $H$-free LP for the TSP runs in exponential time, where $H$ are the subtour inequalities.

\section{Matching problems}
A matching in a graph $G=(V,E)$ is a set of edges that do not share any common vertices.
Let $n=|V|$.
A matching is called perfect if it contains exactly $n/2$ edges. For a given instance $G$ we define
the matching polytope $Q$ to be the convex hull of the 0/1 incidence vectors $x=(x_e : e\in E )$
of matchings.

For any $S \subseteq V$ and $e \in E$, we write that $e \in S$ whenever both endpoints of $e$ are in $S$.
Edmonds \cite{Edmonds1965b} proved that $Q$ has the following halfspace representation:
\begin{eqnarray}
\label{odd}
\sum_{e \in S} x_e &\leq& (|S|-1)/2,~~~~S \subseteq V, ~~ |S|~is~odd \\
0 &\le& x_e  \le ~ 1,~~~~~~ e \in E. 
\end{eqnarray}
Rothvo{\ss} \cite{Rothvoss13} recently proved that $xc(Q)=2^{\Omega(n)}$ and that a similar result holds
for the convex hull of all perfect matchings.  
Let $H$ be this half-space representation of $Q$. Since optimization over $Q$ can be performed in
poly-time by Edmonds algorithm there is a poly-time separation algorithm for $H$.
It follows that the matching problem has a poly-size $H$-free EF.

In the next three subsections we give $NP$-hard generalizations of the matching problem which have super-polynomial lower bounds
on their $H$-free extension complexity,  
where $H$ are the odd set inequalities (\ref{odd}).
The method used is similar to that described in detail in \cite{AT13}.

\subsection{Induced matchings}

A matching in a graph $G=(V,E)$ is called induced if there is no edge in $G$ between any pair of matching edges.
Stockmeyer and Vazirani \cite{Stockmeyer} and Cameron \cite{Cameron89} 
proved that the problem of finding a maximum cardinality
induced matching is $NP$-hard. Let $Q$ be the convex hull of the incidence vectors of all induced matchings in $G$.
Let $H$ be the odd  set inequalities (\ref{odd}). Clearly $H$ are valid for $Q$, and as remarked above, they admit
a poly-time separation oracle. We will prove that $\xc_H(Q)$ is super-polynomial.
Our proof makes use of the reduction in \cite{Cameron89}.

\begin{theorem}\label{thm:xc_induced_matchings}
For every $n$ there exists a bipartite graph $G$ with $O(n)$ edges and vertices such that the induced matching polytope of $G$ has extension complexity $2^{\Omega(\sqrt[4]{n})}.$
\end{theorem}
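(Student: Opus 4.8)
The plan is to exploit the known hardness of the matching polytope (Rothvo{\ss}'s $2^{\Omega(n)}$ bound) by transferring it through Cameron's reduction from maximum induced matching. The key idea is that the induced matching polytope of a cleverly constructed graph $G$ contains, as a face, a polytope that is affinely isomorphic to (or projects onto) the perfect matching polytope of a smaller graph. Since a face of a polytope has extension complexity no larger than the polytope itself, a super-polynomial lower bound on the face yields the same bound on $\xc_H(Q)$, provided the odd-set inequalities $H$ are still redundant on that face — which holds because $H$ is valid for $Q$ and hence does not cut the face. Concretely, I would first recall Cameron's construction: given a graph $G_0$ on $m$ vertices, one builds $G$ by replacing each edge of $G_0$ by a suitable gadget (a short path or small subgraph) so that induced matchings in $G$ correspond bijectively to matchings in $G_0$. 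The blow-up here is roughly quadratic in the relevant parameters, which is exactly why the exponent degrades from $\Omega(n)$ for matching to $\Omega(\sqrt[4]{n})$: one loses a square root from the gadget blow-up in Cameron's reduction and another square root from the fact that Rothvo{\ss}'s bound is naturally stated in terms of the vertex count, so bounding by $n$ = number of edges/vertices of $G$ costs a further root.

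First I would fix $G_0$ to be a graph on $\Theta(\sqrt{n})$ vertices for which the perfect matching polytope has extension complexity $2^{\Omega(\sqrt{n})}$ — e.g.\ the complete graph, by Rothvo{\ss}. Next I would apply Cameron's edge-gadget reduction to get a bipartite graph $G$ with $O(n)$ vertices and edges; bipartiteness is automatic from the gadgets (paths of appropriate length), which matches the theorem statement. The third step is the core geometric argument: identify an affine subspace $L$ (obtained by fixing the gadget-internal coordinates to the values forced when the corresponding $G_0$-edge is ``used'') such that $Q \cap L$ is the image, under a coordinate projection, of the perfect matching polytope $Q(G_0)$, or is itself affinely equivalent to it. One must check that $Q \cap L$ is a face of $Q$ (it is cut out by setting certain coordinates to their extreme feasible values, so it is exposed by a valid inequality), and that the projection is onto. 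Then $\xc(Q) \ge \xc(Q\cap L) \ge \xc(\text{image}) = \xc(Q(G_0)) = 2^{\Omega(\sqrt{n})}$... wait — here is where the bookkeeping matters: because $G_0$ has $\Theta(\sqrt{n})$ vertices, Rothvo{\ss} gives $2^{\Omega(\sqrt{n})}$ already; the stated $2^{\Omega(\sqrt[4]{n})}$ suggests instead taking $G_0$ on $\Theta(\sqrt[4]{n})$ vertices so as to keep Cameron's (super-linear) gadget blow-up within $O(n)$, and I would follow whichever parameter balance Cameron's reduction actually dictates. Finally, since $H$ (the odd-set inequalities) are valid for $Q$, they are redundant on every face, so $Q_H$ retains the same face and $\xc_H(Q) = \xc(Q_H) \ge \xc(Q\cap L)$, giving the claim.

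The main obstacle I anticipate is the third step: verifying that Cameron's combinatorial reduction, which a priori only preserves \emph{optima}, in fact induces a clean \emph{polyhedral} relationship — that the relevant slice of the induced matching polytope really is (a projection of) the matching polytope of $G_0$ and not some larger polytope with extra fractional vertices. This requires arguing that the gadgets are ``rigid'' enough that any vertex of $Q$ lying in $L$ has $0/1$ gadget-internal coordinates encoding a genuine matching of $G_0$, and conversely that every such matching lifts to a vertex of $Q$. This is the same style of argument as in \cite{AT13}, as the paper notes, so I would follow that template: build the face explicitly, exhibit the two-way correspondence between its vertices and matchings of $G_0$, and then invoke monotonicity of extension complexity under taking faces and linear images together with Rothvo{\ss}'s lower bound. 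The remaining steps — counting vertices and edges of $G$, checking bipartiteness, and noting the redundancy of $H$ on faces — are routine.
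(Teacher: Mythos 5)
There is a genuine gap, and it lies exactly at the step you yourself flag as the ``main obstacle.'' Your plan transfers Rothvo{\ss}'s bound \cite{Rothvoss13} by claiming that Cameron's construction replaces edges of a source graph $G_0$ by gadgets so that induced matchings of the new graph correspond to \emph{matchings} of $G_0$. That is not what Cameron's reduction does: it reduces from \emph{stable set}, and the paper uses precisely this polyhedral form of it --- for every graph $G$ one gets, in polynomial time, a bipartite graph $G'$ with $|V'|=2|V|$ and $|E'|=|V|+28|E|$ such that the stable set polytope of $G$ is the projection of a face of the induced matching polytope of $G'$ \cite{Cameron89}. The lower bound then comes not from Rothvo{\ss} but from \cite{AT13}: there are graphs with $O(n)$ vertices and edges whose stable set polytope has extension complexity $2^{\Omega(\sqrt[4]{n})}$, and since Cameron's blow-up is \emph{linear} (not quadratic) and extension complexity is monotone under faces and projections, the same exponent $\sqrt[4]{n}$ carries over. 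So your bookkeeping (a square root lost to gadget blow-up, another to Rothvo{\ss}'s vertex-count statement) does not reflect the actual argument, and the correspondence your proof hinges on --- vertices of a face of the induced matching polytope encoding matchings of $G_0$ via edge gadgets --- is asserted but never constructed; as written, the proof is incomplete at its crucial step.

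Your underlying idea is salvageable, but only with an ingredient you omit: matchings of $G_0$ are exactly the stable sets of the line graph $L(G_0)$, so one could feed $L(K_m)$ (which has $\Theta(m^2)$ vertices and $\Theta(m^3)$ edges) into Cameron's stable-set-to-induced-matching reduction and invoke \cite{Rothvoss13} to get a bound of the form $2^{\Omega(n^{1/3})}$; but the line-graph step, and the verification that Cameron's reduction has the stated face/projection property, are precisely the missing content. Two smaller points: the theorem itself concerns plain extension complexity, so the remarks about $\xc_H$ and redundancy of the odd set inequalities on faces are not needed here (the paper derives the $H$-free consequence afterwards, simply from bipartiteness of $G'$); and note that for bipartite $G_0$ the matching polytope has small extension complexity, so any matching-based route must, as you implicitly do, start from a non-bipartite $G_0$ and obtain bipartiteness only through the gadgets --- one more reason the unproven gadget claim cannot be waved through.
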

\begin{proof}
For every graph $G=(V,E)$ one can construct in polynomial time another graph 
$G'=(V',E')$ with $|V'|=2|V|$ and $|E'|=|V|+28|E|$ such that the stable set polytope of $G$ is the 
projection of a face of the induced matching polytope of $G'$ \cite{Cameron89}.
Furthermore, $G'$ is bipartite. Since, for every $n$ there exist graphs with $O(n)$ edges and 
vertices such that the stable set polytope of the graph has extension complexity 
$2^{\Omega(\sqrt[4]{n})}$ \cite{AT13}, the result follows.
\end{proof}

Since the above theorem applies to bipartite graphs $G$, each of the odd set inequalities (\ref{odd}) is 
redundant for the induced matching polytope of $G$. Therefore the $H$-free extension complexity of the induced matching polytope is super polynomial in the worst case.

Although this example offers an example $H$-free extension complexity, it suffers from one obvious weakness.
For every graph, all of the inequalities in $H$ are redundant with respect to $Q$ even
for non-bipartite graphs! 
A graph is called {\em hypomatchable} if the deletion of any vertex yields a graph with a perfect matching.
Pulleyblank proved in 1973 (see \cite{matching}) that facet-inducing inequalities in (\ref{odd})
correspond to subsets $S$ that span 2-connected hypomatchable subgraphs of $G$.
Let $x$ be the incidence vector for any matching $M$ 
in $G$ that satisfies such an inequality as an equation. Since $S$
spans a 2-connected subgraph, $M$ cannot be an induced matching.

In order to avoid such trivial cases it is desirable that most, if not all, inequalities of $H$ define facets for
at least one polytope $Q$ that corresponds to some instance of the given problem.

\subsection{Maximal matchings}
A matching in a graph $G=(V,E)$ is called maximal if its edge set is not included in a larger matching. Rather naturally, we will call the 
convex hull of the characteristic vectors of all maximal matchings of $G$ 
the maximal matching polytope of $G$ 
and denote it by $MM(G)$.  It is known that finding the minimum maximal matching is $NP$-hard \cite{YG}. 
Now we show that for every $n$ there exists a graph with $n$ vertices such that $MM(G)$ has super polynomial $H$-free extension complexity where $H$ denotes the set of odd cut inequalities.

For every 3-CNF formula $\phi$ we call the convex hull of all satisfying assignments the satisfiability polytope of $\phi$.

\begin{theorem}\label{thm:xc_restricted_sat}
For every $n$ there exists a 3-CNF formula in $O(n)$ variables such that the satisfiability polytope has super polynomial extension complexity. Furthermore, in the formula every variable appears at most twice non-negated and at most once negated.
\end{theorem}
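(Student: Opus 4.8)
The plan is to follow the reduction style of \cite{AT13}: fix a graph $G$ with $O(n)$ vertices and edges whose stable set polytope satisfies $\xc(\STAB(G)) = 2^{\Omega(\sqrt[4]{n})}$, and build a $3$-CNF formula $\phi_G$ on $O(n)$ variables such that $\STAB(G)$ is, up to a coordinatewise reflection $t \mapsto 1-t$, the image under a coordinate projection of a \emph{face} of the satisfiability polytope $P_{\phi_G}$. Since extension complexity does not increase when we pass to a face or take an affine image, this yields $\xc(P_{\phi_G}) \geq \xc(\STAB(G)) = 2^{\Omega(\sqrt[4]{n})}$, which is super-polynomial in the number of variables of $\phi_G$. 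All the real work is arranging $\phi_G$ so that the occurrence restriction holds.

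For the construction I would represent each vertex $v$, writing $d_v$ for its degree, not by a single variable but by $d_v$ \emph{complement copies} $\hat y^{v,1},\dots,\hat y^{v,d_v}$, intended to take the common value $1-[v\in S]$. I would tie the copies of $v$ together by a directed implication cycle: for each $i$ add the clause $(\bar{\hat y}^{v,i}\vee\hat y^{v,i+1}\vee w^{v,i})$ (indices mod $d_v$), where $w^{v,i}$ is a fresh \emph{dummy} variable; on $0/1$ points the core $\bar{\hat y}^{v,i}\vee\hat y^{v,i+1}$ says $\hat y^{v,i}\le\hat y^{v,i+1}$, so going around the cycle all copies of $v$ are forced equal (with the obvious simplification when $d_v\le 1$). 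For each edge $e=\{u,v\}$ I would add the clause $(\hat y^{u,j}\vee\hat y^{v,k}\vee w_e)$, where $\hat y^{u,j}$ and $\hat y^{v,k}$ are the copies of $u$ and $v$ reserved for $e$ and $w_e$ is again a fresh dummy; its core $\hat y^{u,j}\vee\hat y^{v,k}$ forbids $u$ and $v$ from lying together in $S$. Finally let $F$ be the subset of $P_{\phi_G}$ on which every dummy variable vanishes: since each $x_w\ge 0$ is a valid inequality of the $0/1$ polytope $P_{\phi_G}$ whose tight set is $\{x_w=0\}$, $F$ is an intersection of faces, hence a face, and I would project $F$ onto the coordinates $\hat y^{v,1}$, $v\in V$.

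The verifications I would carry out are: (i) $F$ is non-empty and its vertices are exactly the $0/1$ satisfying assignments of $\phi_G$ with all dummies $0$ — this is because each dummy occurs only positively and only in a padded clause, so setting it to $0$ merely reactivates the corresponding core, and the empty stable set supplies one such assignment; (ii) on $F$ the implication cycles make all copies of each vertex equal and the edge cores make $\{v:\hat y^{v,1}=0\}$ an independent set of $G$, while conversely every independent set lifts uniquely (all copies set to $1-[v\in S]$, all dummies $0$), so $F$ is affinely isomorphic to $\STAB(G)$; (iii) the occurrence count — a copy variable $\hat y^{v,i}$ appears negated exactly once (as the source of one implication clause), positively exactly once as the target of the preceding implication clause, and positively exactly once in the unique edge clause that reserved it, so twice non-negated and once negated; a dummy appears positively exactly once and never negated. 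Hence every variable of $\phi_G$ meets the claimed bound, and $\phi_G$ has $O(n)$ vertex-copies, $O(\sum_v d_v)=O(n)$ implication dummies and $O(|E|)=O(n)$ edge dummies, i.e.\ $O(n)$ variables in all; combining (i)--(iii) with the monotonicity of $\xc$ under faces and affine maps gives the theorem.

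The main obstacle is meeting the very tight budget of ``at most two positive, at most one negative'' occurrences \emph{simultaneously} with preserving the polytope structure. The naive independent-set encoding uses $(\bar x_u\vee\bar x_v)$, spending one negated occurrence of $x_v$ per incident edge, and the usual way to widen a $2$-clause $(\ell_1\vee\ell_2)$ to width three, namely $(\ell_1\vee\ell_2\vee w)\wedge(\ell_1\vee\ell_2\vee\bar w)$, doubles every occurrence of $\ell_1$ and $\ell_2$; both are fatal here. The resolution is exactly the two design choices above: work with the complemented literals $1-x_v$ so that each copy of a vertex is a \emph{positive} target of one implication clause and carries only a \emph{single} negative occurrence as the source of the next one, and widen every clause with its own fresh dummy pinned to $0$ on the face rather than with a $\pm w$ pair. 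The one remaining delicate point is checking that pinning all the dummies really cuts out a face with no spurious extreme points, which is where the fact that each dummy occurs only positively and only in padded clauses is used.
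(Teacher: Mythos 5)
Your proposal is correct, but it reaches the theorem by a genuinely different route than the paper. The paper starts from an arbitrary hard 3-CNF $\phi$ (from \cite{FMPTW,AT13}), splits each variable into one copy per occurrence, ties the positive copies and the (complement-encoded) negative copies together with an implication \emph{cycle} of 2-clauses, widens every 2-clause to width three by the standard trick of duplicating it with a fresh variable $w$ in one copy and $\overline w$ in the other, and then recovers the satisfiability polytope of $\phi$ as a coordinate projection of the whole polytope of the new formula $\psi$. You instead encode stable sets directly (as in the paper's own 2-SAT argument, Theorem~\ref{thm:xc_2-sat}): complemented degree-many copies per vertex, an implication cycle per vertex, one positive edge clause per edge, each 2-clause padded by a \emph{single} fresh dummy, and you recover (a reflection of) $\STAB(G)$ as a projection of the \emph{face} on which all dummies vanish, invoking in addition the standard fact that $\xc$ does not increase under passing to faces. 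The two constructions share the copy-cycle idea, but your padding device is the more careful one on precisely the delicate point of the theorem: the paper's $w/\overline w$ duplication doubles the occurrences of the two original literals of each 2-clause, so its chain variables end up occurring three times non-negated and twice negated, exceeding the stated budget, whereas your single-dummy-plus-face variant keeps every copy at two positive and one negative occurrence and every dummy at one positive occurrence. The price you pay is the extra (routine) verification that fixing the dummies to zero cuts out a face whose vertices are exactly the intended satisfying assignments; the price the paper pays is that its projection argument needs the padded formula to be logically equivalent to the unpadded one, which is what forces the occurrence-inflating duplication. Your minor glossed cases (isolated or degree-one vertices) are harmless, e.g.\ by assuming the hard graphs of \cite{AT13} have no isolated vertices.
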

\begin{proof}
The statement is known to be true without the restriction on the number of occurrences of the literals (\cite{FMPTW,AT13}). To impose the restriction that every variable appear at most twice non-negated and at most once negated, once can perform the following simple operations.

For any given 3-CNF formula $\phi$ construct another formula $\psi$ as follows. For each variable $x_i$, replace the occurrence of $x_i$ in clause a $C_j$ by the variable $x_i^j$ and the occurrence of $\overline{x}_i$ in clause a $C_j$ by the variable $y_i^j.$ Suppose $x_i^1,\ldots,x_i^k,y_i^1,\ldots,y_i^l$ are 
the variables replacing $x_i.$ 
We add extra clauses corresponding to the conditions $x_i^j\implies x_i^{j+1}$, that is, $\overline{x}_i^j\vee x_i^{j+1}$ for $i=1,\ldots,k-1$.
We also add more clauses corresponding to the conditions $\overline{y}_i^j\implies \overline{y}_i^{j+1}$, that is, $y_i^j\vee \overline{y}_i^{j+1}$ for $i=1,\ldots,l-1.$ 
We add two additional clauses:
$\overline{x}_i^k\vee \overline{y}_i^{1}$ 
ensures that $x_i^k\implies \overline{y}_i^{1}$ and
$y_i^l\vee x_i^{1}.$ 
ensures that $\overline{y}_i^l\implies x_i^{1}$. 
Finally since the new clauses contain two literals they are converted to clauses with three literals each
in the usual way: duplicate each clause, add a new variable to one clause and its complement
to the other. This gives a 3-CNF formula $\psi$ with 
the required properties and where the number of variables and clauses is polynomial in the size of $\phi$.

It is easy to see that the satisfiability polytope for $\phi$ is obtained by projecting the satisfiability polytope of $\psi$ along one of the variables $x_i^k$ for each $i.$ Therefore the extension complexity of the satisfiability polytope of $\psi$ is at least as high as that of $\phi$ and we have a family of 3-CNF formula with the desired restricted occurrences that have super polynomial extension complexity in the worst case.
\end{proof}

\begin{theorem}
For every $n$ there exists a bipartite graph $G=(V_1\cup V_2,E)$ such that $MM(G)$ has extension complexity super polynomial in $n$.
\end{theorem}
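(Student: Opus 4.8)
The plan is to follow the recipe of \cite{AT13}: start from the polynomial-time reduction underlying the $NP$-hardness of minimum maximal matching \cite{YG}, re-implement it with bipartite gadgets, and then argue that the reduction is faithful at the level of polytopes, i.e.\ that the satisfiability polytope of the input formula is a linear image of a face of the maximal matching polytope of the graph produced. Since extension complexity is non-increasing under taking faces and under linear images (see, e.g., \cite{FMPTW}), combining this with Theorem \ref{thm:xc_restricted_sat} yields the claimed lower bound.

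First I would fix a formula $\phi$ as supplied by Theorem \ref{thm:xc_restricted_sat}: a satisfiable $3$-CNF formula on $O(n)$ variables, each occurring at most twice non-negated and at most once negated, whose satisfiability polytope has super-polynomial extension complexity. From $\phi$ I build a graph $G=G(\phi)$ by gluing together gadgets. For each variable $x_i$ I use a fixed small \emph{bipartite} variable gadget carrying a distinguished ``state'' edge $s_i$ and having two maximal matchings $M_i^{\mathrm{T}}$ and $M_i^{\mathrm{F}}$ with $s_i\in M_i^{\mathrm{T}}$ and $s_i\notin M_i^{\mathrm{F}}$; the gadget has up to three \emph{port} vertices, one for each positive occurrence (at most two) and one for the negative occurrence (at most one), chosen so that in $M_i^{\mathrm{T}}$ exactly the positive ports are left exposed while in $M_i^{\mathrm{F}}$ exactly the negative port is left exposed. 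For each clause $C_j$ I use a \emph{clause gadget} containing a vertex $c_j$ joined to the ports of the literals of $C_j$, arranged so that a maximal matching can cover $c_j$ precisely by using an edge to an exposed port, i.e.\ to the port of a literal that is set to true. Because each variable has a bounded number of occurrences, every variable gadget is one of finitely many fixed bipartite graphs, so $G$ is bipartite with $\mathrm{poly}(n)$ vertices and edges.

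Next I would single out the face $F$ of $MM(G)$ defined by fixing the coordinates of all internal (non-port) gadget edges to their canonical values and by requiring every clause vertex $c_j$ to be covered. The gadget design should guarantee that the vertices of $F$ are exactly the maximal matchings that restrict to $M_i^{\mathrm{T}}$ or $M_i^{\mathrm{F}}$ on each variable gadget and that cover every $c_j$; such a matching encodes a truth assignment (the set of $x_i$ with $s_i$ in the matching), which by the clause-gadget property must satisfy every clause, together with a choice of a satisfying literal per clause. Projecting $F$ onto the coordinates indexed by the state edges $s_i$ forgets the per-clause choices and maps the vertices of $F$ onto precisely the $0/1$ satisfying assignments of $\phi$; since $F$ and its image are $0/1$ polytopes, the image equals the satisfiability polytope of $\phi$. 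Therefore
\[
  \xc\bigl(MM(G)\bigr)\ \ge\ \xc(F)\ \ge\ \xc\bigl(\proj(F)\bigr)\ =\ \xc(\text{satisfiability polytope of }\phi),
\]
which is super-polynomial in $n$, and hence, since $G$ has only $\mathrm{poly}(n)$ vertices, super-polynomial in $|V(G)|$; re-parametrising the construction by the number of vertices of $G$ gives the statement.

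The main obstacle is the polyhedral faithfulness used above --- that after restricting to the face $F$ no \emph{spurious} maximal matchings survive. One must check that a maximal matching in $F$ cannot put a variable gadget into a ``mixed'' state (exposing some positive and some negative ports at once, or matching inconsistently across the gadget) and cannot cover a clause vertex without routing through a true literal. Verifying this requires a careful, but finite, case analysis of each gadget and of how the gadgets interact through the shared port vertices; this is also exactly where the occurrence restriction of Theorem \ref{thm:xc_restricted_sat} is needed, since with an unbounded number of occurrences the natural path-shaped variable gadget acquires many maximal matchings, the two-state behaviour is lost, and with it the clean description of $\proj(F)$.
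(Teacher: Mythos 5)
Your proposal follows essentially the same route as the paper: take the restricted 3-CNF instances of Theorem \ref{thm:xc_restricted_sat} and use the bipartite reduction behind the $NP$-hardness of minimum maximal matching so that the satisfiability polytope of $\phi$ is recovered from $MM(G)$ by face/projection operations, then invoke monotonicity of extension complexity. The paper's own proof is even terser --- it simply asserts that $MM(G)$ is an extended formulation of the satisfiability polytope of the restricted formula --- so your sketch, including the deferred gadget case analysis and the observation of where the occurrence restriction is needed, is a faithful (indeed more detailed) rendering of the same argument.
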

\begin{proof}
For every restricted 3-SAT formula $\phi$ one can construct, in polynomial time, a bipartite graph $G$ such that the maximal matching polytope $MM(G)$ is an extended formulation of the satisfiability polytope of $\phi.$ Therefore, the extension complexity of $MM(G)$ is super polynomial for the formulae used in Theorem \ref{thm:xc_restricted_sat}.
\end{proof}

Again, since the graphs $G$ in the above theorem are bipartite, each of the odd set inequalities (\ref{odd}) 
is redundant for the maximal matching polytope of $G$. Therefore the $H$-free extension complexity of the induced matching polytope is super polynomial in the worst case.

This example differs from the example in the previous subsection in that (\ref{odd})
are facet defining for maximum matching polytopes 
of non-bipartite graphs. Too see this, fix a graph $G$ and odd-set $S$ of its vertices.
Pulleyblanks's characterisation \cite{matching} states that (\ref{odd})
is facet defining for the matching polytope of $G$ whenever $S$ spans a 2-connected hypomatchable subgraph.
The only matchings in $G$ that lie on this facet 
have precisely $(|S|-1)/2$ edges from the set $S$ and
are therefore maximal on $S$. Each of these matchings can be extended to a maximal matching in $G$
which appears as a vertex of $MM(G)$. Therefore, provided these extensions do not lie in
a lower dimensional subspace and $MM(G)$ is full dimensional, (\ref{odd})
is also facet inducing for  $MM(G)$ for the given set $S$.
For example, the odd cycles $C_{2k+1},~k \ge 3$ with the addition of a chord cutting off a triangle
are a family of such graphs.

\subsection{Edge disjoint matching and perfect matching }
Given a bipartite graph $G(V_1\cup V_2,E)$ and a natural number $k$, it is $NP$-hard to decide whether $G$ contains a perfect matching $M$ and a matching $M'$ of size $k$ such that $M$ and $M'$ do not share an edge \cite{domotor}.

For a given graph $G$ with $n$ vertices and $m$ edges consider a polytope in the variables $x_1,\ldots,x_m,y_1,\ldots,y_m.$  For a subset of edges encoding a perfect matching $M$ and a matching $M'$ of size $k$ the we construct a vector with $$x_i=\begin{cases} 1, &\text{if } e_i\in M\\ 0, &\text{if } e_i\notin M\end{cases},~~~~\\y_i=\begin{cases} 1, &\text{if } e_i\in M'\\ 0, &\text{if } e_i\notin M'\end{cases}$$

Let us denote the convex hull of all the vectors encoding an edge disjoint perfect matching and a matching of size at least $k$ as $MPM(G,k).$ We would like to remark that one can also define a ``natural'' polytope here without using separate variables for a matching and a perfect matching and instead using the charectestic vectors of all subsets of edges that are an edge-disjoint union of a matching and a perfect matching. However, the formulation that we consider allows different cost functions to be applied to the matching and the perfect matching.

Now we show that for every $n$ there exists a bipartite graph $G$ with $n$ vertices and a constant $0<c<\frac{1}{2}$ such that $MPM(G,cn)$ has extension complexity super polynomial in $n$. We will use the same reduction as in \cite{domotor}, which is a reduction from MAX-2-SAT. So we first prove a super polynomial lower bound for the satisfiability polytope of 2-SAT formulas.

\begin{theorem}\label{thm:xc_2-sat}
For every $n$ there exists a 2-SAT formula $\phi$ in $n$ variables such that the satisfiability polytope of $\phi$ has extension complexity at least $2^{\Omega(\sqrt[4]{n})}.$
\end{theorem}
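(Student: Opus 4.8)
The plan is to realise the satisfiability polytope of a suitably chosen $2$-CNF formula as a \emph{stable set polytope}, so that the lower bound of \cite{AT13} already used in the proof of Theorem~\ref{thm:xc_induced_matchings} applies verbatim. The key observation is elementary: a $2$-clause consisting of two negated literals, $\overline{x}_u\vee\overline{x}_v$, is satisfied by a $0/1$ assignment exactly when $x_u$ and $x_v$ are not both equal to $1$, so a conjunction of such clauses describes precisely the independent sets of a graph.

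Concretely, recall from \cite{AT13} that for every $n$ there is a graph $G=(V,E)$ with $|V|=O(n)$ and $|E|=O(n)$ whose stable set polytope $\STAB(G)$ --- the convex hull of the incidence vectors of all stable sets of $G$ --- satisfies $\xc(\STAB(G))=2^{\Omega(\sqrt[4]{n})}$. For such a $G$ introduce one Boolean variable $x_u$ per vertex $u\in V$ and form the $2$-CNF formula
\[
\phi_G\;:=\;\bigwedge_{\{u,v\}\in E}\bigl(\overline{x}_u\vee\overline{x}_v\bigr) .
\]
An assignment $x\in\{0,1\}^V$ satisfies $\phi_G$ if and only if the vertex set $\{u\in V\mid x_u=1\}$ contains no edge of $G$, i.e.\ if and only if $x$ is the incidence vector of a stable set of $G$. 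Passing to convex hulls, the satisfiability polytope of $\phi_G$ is literally $\STAB(G)$, and hence has extension complexity $2^{\Omega(\sqrt[4]{n})}$; since $\phi_G$ has $|V|=O(n)$ variables, padding it with unused variables --- which merely multiplies $\STAB(G)$ by intervals $[0,1]$, a projection of which recovers $\STAB(G)$ and therefore cannot decrease the extension complexity --- lets one reach exactly $n$ variables while keeping the bound $2^{\Omega(\sqrt[4]{n})}$.

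I do not expect a genuine obstacle here: the $2$-SAT lower bound follows from a literal identification of two polytopes, with no loss beyond the fourth-root loss already present in the stable set bound of \cite{AT13}, which is simply inherited. The only points that need a modicum of care are the routine bookkeeping between the number of vertices of $G$ and the number of variables of $\phi_G$ and the trivial padding step used to cover every $n$. Note also that, in contrast with the $3$-SAT constructions of Theorem~\ref{thm:xc_restricted_sat}, no control over the number of occurrences of each literal is required, since here the formula is already completely explicit.
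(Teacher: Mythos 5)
Your proof is correct and is essentially identical to the paper's: both encode the stable sets of the hard graphs from \cite{AT13} by the 2-CNF $\bigwedge_{\{u,v\}\in E}(\overline{x}_u\vee\overline{x}_v)$, identify the satisfiability polytope with the stable set polytope, and inherit the $2^{\Omega(\sqrt[4]{n})}$ bound. The only addition is your explicit padding remark to hit exactly $n$ variables, a detail the paper leaves implicit.
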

\begin{proof}
It was shown in \cite{AT13} that for every $n$ there exists a graph $G$ with $O(n)$ edges and vertices such that the stable set polytope of $G$ has extension complexity $2^{\Omega(\sqrt[4]{n})}.$ Since the stable sets of a graph can be encodes as a 2-SAT formula as: $\displaystyle\bigwedge_{(i,j)\in E}(\overline{x}_i\vee \overline{x}_j),$ we obtain a family of 2-SAT formulas whose satisfiability polytope for extension complexity at least $2^{\Omega(\sqrt[4]{n})}.$
\end{proof}

Note that the 2-SAT instances required in the above theorem are always satisfiable.

\begin{theorem}\label{thm:xc_mpm}
For every $n$ there exists a bipartite graph $G$ on $n$ vertices and a constant $0<c<\frac{1}{2}$ such that $MPM(G,cn)$ has super polynomial extension complexity.
\end{theorem}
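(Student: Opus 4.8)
The plan is to take the polynomial-time reduction of P\'alv\"olgyi \cite{domotor} from MAX-2-SAT to the edge-disjoint matching/perfect-matching problem and promote it, in the style of the reductions in \cite{AT13}, to a statement relating the associated polytopes. The input is supplied by Theorem \ref{thm:xc_2-sat}: for every $n$ it gives a \emph{satisfiable} 2-SAT formula $\phi$ on $O(n)$ variables and $O(n)$ clauses whose satisfiability polytope $Q_\phi$ has extension complexity $2^{\Omega(\sqrt[4]{n})}$. The goal is to feed such a $\phi$ into the reduction and read off a lower bound for $MPM(G,cn)$.

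The core of the argument is to trace the reduction at the level of incidence vectors rather than mere feasibility. Given $\phi$ with $m$ clauses, the reduction produces in polynomial time a bipartite graph $G$ on $N=O(n)$ vertices and $O(n)$ edges together with an integer $k=k(\phi)$, such that the maximum size of a matching $M'$ that is edge-disjoint from some perfect matching $M$ of $G$ equals $B+t$, where $B=B(\phi)$ depends only on the gadget structure and $t$ is the maximum number of clauses of $\phi$ that can be simultaneously satisfied. Since $\phi$ is satisfiable, $t=m$, so $k:=B+m$ is exactly this maximum, and every pair $(M,M')$ counted in $MPM(G,k)$ has $|M'|=k$. I would then check, gadget by gadget, that on these optimal configurations the selector edges are forced to encode a truth assignment of $\phi$: some of the coordinates of $\RR^{2m}$ restrict, on the vertices of $MPM(G,k)$, to precisely the 0/1 vectors of the satisfying assignments of $\phi$ (each appearing at least once, possibly with multiplicity). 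In other words, $MPM(G,k)$ is an extension of $Q_\phi$ via a coordinate projection; if the raw reduction leaves some optimal configurations that do not project onto assignments, I would first restrict to the face of $MPM(G,k)$ cut out by the relevant selector constraints, which does not affect the conclusion.

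Finally I would fix the constant. The target value $k(\phi)=B+m$ is linear in $N$, and since $M'$ is edge-disjoint from a perfect matching the construction keeps $k(\phi)<N/2$; after padding $G$ with a bounded number of disjoint edges if necessary, we may write $k(\phi)=cN$ for a fixed rational $c$ with $0<c<\tfrac12$ that does not depend on $n$. Because a coordinate projection (and, if needed, first passing to a face) cannot increase extension complexity, $\xc(MPM(G,cN))\ge \xc(Q_\phi)=2^{\Omega(\sqrt[4]{N})}$, and renaming $N$ as $n$ yields the statement.

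The main obstacle is the middle step: one must re-open the reduction of \cite{domotor}, which is written purely as a decision-problem reduction, and verify that it is ``polytope-faithful'' --- that selecting the maximum matching size singles out a polytope (or a face of one) whose vertices stand in a clean affine correspondence with the satisfying assignments of $\phi$, with no spurious vertex surviving the projection and with every assignment represented. A secondary, more mechanical point is pinning down the exact value of $c$ and confirming the strict inequality $c<\tfrac12$ from the explicit gadget sizes.
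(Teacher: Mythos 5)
Your proposal follows essentially the same route as the paper: feed the hard, satisfiable 2-SAT instances from Theorem~\ref{thm:xc_2-sat} into the reduction of \cite{domotor}, observe that the satisfiability polytope of $\phi$ is a (coordinate) projection of $MPM(G,k)$ with $k=2mn+s$, and conclude by monotonicity of extension complexity under projections. The paper's own proof is in fact terser than your sketch --- it simply asserts the projection property without the gadget-level verification or the discussion of the constant $c$ that you flag --- so your plan is, if anything, a more carefully hedged version of the same argument.
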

\begin{proof}
The construction in \cite{domotor} implicitly provides an algorithm that given any 2-SAT formula $\phi$ with $n$ variables and $m$ clauses constructs a bipartite graph $G$ with $4mn+4m$ vertices and maximum degree 3 such that for $k=2mn+s,$ there is an assignment of variables that satisfy at least $s$ clauses of $\phi$ if and only if $G$ has and edge disjoint perfect matching and a matching of size $k$. Further the satisfiability polytope of $\phi$ is the projection of $MPM(G,k)$ and so we obtain a family of bipartite graphs with super polynomial extension complexity.
\end{proof}

Note that for every pair of odd subsets $S_1,S_2$ of $G$ two odd set inequalities can be written: one corresponding to the odd set inequalities for perfect matching polytope on variables $x_i,$ and the other corresponding to the odd set inequalities for matching polytope on variables $y_i.$ 
For a subset of vertices $S$, let $\delta(S)$ denote the subset of edges with exactly one endpoint in $S$.
The two sets of inequalities are:
\begin{eqnarray}
\label{odd_pm}
\sum_{e \in \delta(S_1)} x_e &\geqslant& 1,~~~~~~~~~~~~~S_1 \subseteq V, ~~ |S_1|~is~odd \\
\label{odd_m}
\sum_{e \in S_2} y_e &\leqslant& \frac{|S_2|-1}{2},~~~~S_2 \subseteq V, ~~ |S_2|~is~odd 
\end{eqnarray}

Again the graphs $G$ in the above theorem are bipartite so each of the odd set inequalities 
(\ref{odd_pm},\ref{odd_m}) is redundant for $MPM(G)$. Therefore taking $H$ to be the set of these inequalities we have that the $H$-free extension complexity of the these polytopes is super polynomial in the worst case. 

\section{Concluding remarks}
  \label{sect:conclusion}
We have proposed a generalization of extended formulations and extension complexity which allows
partial use of an oracle to separate valid inequalities from a specified set $H$. Our restricted
LP model allows use of the oracle and an explicit half-space representation of the remaining
non-redundant inequalities $Q_H$. In this restricted LP model, all problems in P are solvable in poly-time
even if $H$ itself has super-polynomial extension complexity. On 
the other hand, 
if $xc(Q_H)$ is super-polynomial then so is the running time of any LP in our restricted model.

This model allows for progressively stronger lower bounds as more valid inequalities are included in the
set $H$. For example, for the TSP, it would be of interest to include 
poly-time separable comb inequalities along with the subtour 
elimination inequalities in $H$ and see if one can still prove a super-polynomial bound on $xc(Q_H)$.

\section*{Acknowledgments}

Research of the first author is supported by 
a Grant-in-Aid for Scientific Research on Innovative Areas -- 
Exploring the Limits of Computation, MEXT, Japan.
Research of the second author is partially supported by the Center of Excellence -- Institute for Theoretical Computer  Science, Prague (project P202/12/G061 of GA~\v{C}R).

\bibliographystyle{plain}
\bibliography{hfree}

\end{document}